\documentclass[a4paper,12pt]{article}
\usepackage{amsmath,amssymb,amsfonts,amsthm}
\newtheorem{theorem}{Theorem}[section]
\newtheorem{lemma}[theorem]{Lemma}

\usepackage{graphicx}
\usepackage{enumerate}

\newcommand{\R}{\mathbb{R}}
\newcommand{\Rt}{\mathbb{R}^3}

\newcommand{\Rn}{\mathbb{R}^n}

\newcommand{\cs}[1]{C^{\infty}_0(#1)}


\newcommand{\Lf}{L'^2_{-5/2}}
\newcommand{\Lo}{L'^2_{-1/2}  }
\newcommand{\Ho}{H'^2_{-1/2}}

\newcommand{\Lfn}[1]{\left\lVert#1\right\rVert_{\Lf}}
\newcommand{\Lon}[1]{\left\lVert#1\right\rVert_{\Lo}}
\newcommand{\Hon}[1]{\left\lVert#1\right\rVert_{\Ho}}
\newcommand{\opx}[1]{\left\lVert#1\right\rVert_{\mathcal{L}(X,Z)}}
\newcommand{\opy}[1]{\left\lVert#1\right\rVert_{\mathcal{L}(Y,Z)}}

\title{Small deformations of extreme Kerr black hole initial data}

\author{Sergio Dain$^{1,2}$ and Mar\'\i a E. Gabach Cl\'ement$^{1}$\\
\\
$^1$Facultad de Matem\'atica, Astronom\'{i}a y F\'{i}sica, FaMAF,\\
Universidad Nacional de C\'ordoba,\\
Instituto de F\'{\i}sica Enrique Gaviola, IFEG, CONICET,\\
 Ciudad Universitaria, (5000) C\'ordoba, Argentina.  \\
$^{2}$Max Planck Institute for Gravitational Physics,\\
  (Albert Einstein Institute), Am M\"uhlenberg 1,\\
  D-14476 Potsdam Germany. }

\begin{document}

\maketitle

\abstract{We prove the existence of a family of initial data for Einstein
  equations which represent small deformations of the extreme Kerr black hole
  initial data.  The data in this family have the same asymptotic geometry as
  extreme Kerr. In particular, the deformations preserve the angular momentum
  and the area of the cylindrical end.}

\section{Introduction}


Black holes are one of the most spectacular predictions of General Relativity.
There is growing experimental evidences that indicate that black holes do
indeed exists in nature. Among the most impressive ones are the evidences for
the existence of a supermassive black hole in the center of our galaxy (see the
review article \cite{Reid:2008fp}).

In vacuum, the only stationary black hole is expected to be the Kerr black
hole, characterized by the mass $m$ and the angular momentum $J$ (see
\cite{Chrusciel:2008js} and reference therein for updated results on this
problem). The Kerr black hole satisfies the inequality $m\geq \sqrt{|J|}$. The
limit case $m= \sqrt{|J|}$ is called the extreme black hole. It represents the
stationary black hole with maximum amount of angular momentum per mass
unit. The extreme limit $\sqrt{|J|} \to m$ is singular because the geometry of
the spacetime changes at the limit. This is somehow to be expected since the
extreme case is the borderline between a black hole and a spacetime with a
naked singularity (i.e. the Kerr solution with $0<m < \sqrt{|J|}$.)

There exists relevant reasons to study extreme black
holes. The first one is that there is good experimental evidences for the
existence of nearly extreme black holes in the universe (see
\cite{mcclintock06} for experimental evidences of a black hole with
$J/m^2>0.98$). Then, it is important to understand the dynamics of black holes
near the extreme limit. The second reason is less clear but, we believe, equally
important. As often happens in physical theories, solutions that arise as
asymptotic limits are simpler than other solutions and they provide useful
insight into the theory. In the set of solutions of Einstein equation, extreme
black holes represent a kind of barrier that divide black holes and naked
singularities. From the pure classical point of view, there are evidences that
extreme black holes have some special properties that make them simpler than
non-extreme ones (see the discussion in \cite{Dain:2007pk}). Also, from a
completely different perspective, namely holographic dualities, particular
features of extreme black holes play an important role (see
\cite{Bardeen:1999px} \cite{Guica:2008mu}, see also the review article
\cite{Physics.2.102}).  It appears that extreme black holes have a deep
mathematical structure that it is still uncover.

Finally, there is a third reason to study extreme black holes. In the study of
extreme black hole initial conditions (which is the subject of this article), a
particular kind of geometry appears:  geometries with cylindrical
ends. This geometries have provided to be very useful in the numerical
computations of black holes collisions (they are called 'trumpet' initial
conditions in this context, see \cite{Hannam:2009ib} \cite{Hannam:2008sg}
\cite{Immerman:2009ns}).

As a first step to understand the dynamics near an extreme Kerr black hole, in
this article we study small deformation of the extreme Kerr black hole initial
conditions. We prove the existence of a family of initial data that are close
to extreme Kerr initial data. In particular, the asymptotic geometry of these
initial data is the same as the extreme Kerr geometry.  These data are,
generically, non-stationary. It is important to emphasize that the existence of
these initial conditions it is a priori by no means obvious due to the
character of the extreme Kerr geometry.

The paper is organized as follows. We begin in section \ref{secpert} with a
review of some of the main properties of the extreme Kerr black hole. Then we
state our main result avoiding technical details. We also discuss how the
cylindrical geometry is preserved along the evolution. In section \ref{prueba}
we state our main theorem in a precise form and prove it.  We conclude the
article with a discussion of some relevant open problems in section
\ref{sec:final-comments}. Finally, we have included three appendices. In appendix
\ref{sec:weight-sobol-spac} we prove a decay property of the Sobolev spaces
used in our proof. In appendix \ref{sec:appendix} we prove a property of the
extreme Kerr initial data that plays a central role in the proof. Appendix
\ref{sec:impl-funct-theor} is brief summary of the implicit function theorem,
which is the central analytical tool used in the proof.

\section{Main Result}\label{secpert}

Consider the Kerr black hole with mass $m$ and angular momentum $J$. In the
non-extreme case (i.e. $m>\sqrt{|J|}$) the maximal analytical extension of the
metric has the well known global structure shown in figure \ref{fig:1} (see
\cite{Boyer67}\cite{Carter:1968rr} and also \cite{Carter73}).  Take the
spacelike surface $S$ drawn in this figure.  This surface runs from one
spacelike infinity (denoted by $i_0$) to the other. The topology of this
surface is $S=\mathbb{S}^2\times \mathbb{R}$.  
\begin{figure}
\begin{center}
\includegraphics[width=9cm]{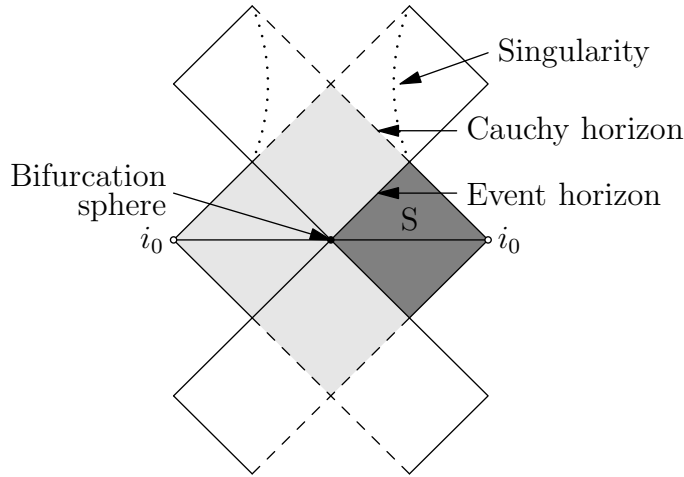}
\caption{Conformal diagram of the Kerr black hole in the non-extreme case. \label{fig:1}}
\end{center}
\end{figure}
The triple $(S,h_{ij}, K_{ij})$, where $ h_{ij}$ is the
induced intrinsic metric on $S$ and $ K_{ij}$ is the second fundamental
form of $S$, constitute an initial data set for Einstein equations. 
That is, they are solutions of the constraint equations
\begin{align}
 \label{const1}
   D_j   K^{ij} -  D^i   K= 0,\\
 \label{const2}
   R -  K_{ij}   K^{ij}+  K^2=0,
\end{align}
where $ {D}$ and $ R$ are the Levi-Civita connection and the Ricci scalar
associated with ${h}_{ij}$, and $ K = K_{ij} h^{ij}$. In these equations the
indices are moved with the metric $ h_{ij}$ and its inverse $ h^{ij}$.

The Riemannian manifold $(S, h_{ij})$ has two asymptotically flat ends (see
figure \ref{fig:2}). This asymptotic geometry is identical to the analogous slice of 
Kruskal extension for Schwarzschild black hole.  The surface $S$ in figure
\ref{fig:1} corresponds to a slice $t=0$ of the Boyer-Lindquist coordinates
$(t,\tilde r,\theta,\phi)$ in Kerr metric (see the appendix
\ref{sec:appendix}). It intersects the bifurcation sphere (denoted by a dark
dot in figure \ref{fig:1} and by a dark circle in figure \ref{fig:2}). The
slice is isometric across this sphere.  The bifurcation sphere on the slice is
both a minimal surface and an apparent horizon.  In these coordinates,
spacelike infinity $i_0$ is represented by the limit $\tilde r\to \infty$. The
intrinsic metric and the second fundamental form satisfy the standard
asymptotically flat fall-off conditions 
\begin{equation}
  \label{eq:7}
  h_{ij}=\delta_{ij}+O(\tilde  r^{-1}), \quad  K_{ij}=O(\tilde r^{-3}),
  \text{ as } \tilde r\to \infty, 
\end{equation}
where $\delta_{ij}$ is the flat metric. 
The strong fall-off behavior of the second fundamental form implies that the
linear momentum of the initial data vanishes. The angular momentum is contained
in the term $O( r^{-3})$ of $ K_{ij}$.  

The maximal development of the initial data set $(S, h_{ij}, K_{ij})$ is shown
in light gray in figure \ref{fig:1}. This region does not cover the whole
analytical extension (as in the case of Schwarzschild's), it has a
smooth boundary in the spacetime. This boundary is known as  Cauchy
horizon. In dark gray the domain of outer communications is shown, which is
bounded by the black hole event horizon.

\begin{figure}
\begin{center}
\includegraphics[width=4cm]{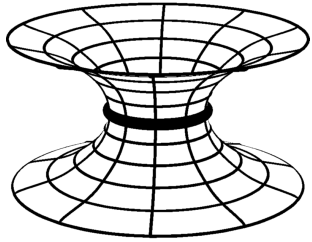}
\caption{The initial data for the non-extreme case. The dark circle in the
  middle represents the minimal surface. \label{fig:2}}
\end{center}
\end{figure}

In the extreme case $m=\sqrt{|J|}$ the global structure of the spacetime
changes. The maximal analytical extension is shown in figure \ref{fig:3}. 
\begin{figure}
\begin{center}
\includegraphics[width=6cm]{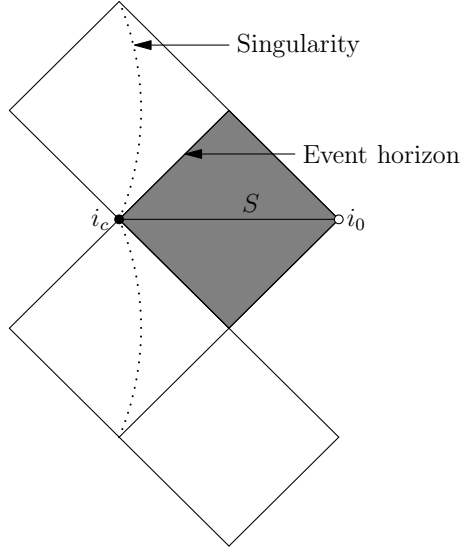}
\caption{Conformal diagram for the extreme Kerr black hole. The cylindrical end
  is denoted by $i_c$. \label{fig:3}}
\end{center}
\end{figure}
The spacelike surface $S$ has the same topology $\mathbb{S}^2\times \mathbb{R}$
as in the non-extreme case, however, the asymptotic geometry of the Riemannian manifold
$(S, h_{ij})$ is different. It has one asymptotically flat end and one
cylindrical end, see figure \ref{fig:4}. The cylindrical end asymptotically approaches
 the event horizon. Contrary to the asymptotically flat case,
this end is in the strong field region of the spacetime.  Note that
$(S,h_{ij})$ is a complete Riemannian manifold without boundary which lies
completely in the black hole exterior region.
\begin{figure}
\begin{center}
\includegraphics[width=4cm]{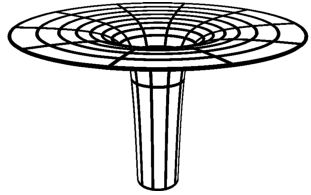}
\caption{The initial data for the extreme Kerr black hole. \label{fig:4}}
\end{center}
\end{figure}
Let us take a closer look at the structure of the cylindrical end. In
isotropic coordinates ($r,\theta,\phi$), with $r:=\tilde r-m$ (see appendix
\ref{sec:appendix}), the induced metric on $S$ has the form
\begin{equation}\label{thcero}
 h^0_{ij}=\Phi_0^4\tilde h^0_{ij},\qquad  \tilde
 h^0=e^{2q_0}(dr^2+r^2d\theta^2)+r^2\sin^2\theta d\phi^2, 
\end{equation}
where $\Phi_0$ and $q_0$ are given by equation \eqref{ficero} in  appendix
\ref{sec:appendix}. The extrinsic curvature is given by 
\begin{equation}
\label{tkcero}
 K^0_{ij}=\frac{2}{\eta} S_{(i} \eta_{j)}, \quad
 S_i=\frac{1}{\eta}\epsilon_{ijk}\eta^j\partial^k\omega_0,
\end{equation}  
where $\eta^i$ is the axial Killing vector, $\eta$ the square of its norm (see
equation \eqref{eq:59}), $\epsilon_{ijk}$ denotes the volume element with
respect to the metric $h_{ij}$ and $\omega_0$ is given by \eqref{wcero}. The
advantage of this particular form of writing $K^0_{ij}$ is that it is easy to
check from \eqref{tkcero} that $K^0_{ij}$ satisfies the momentum constraint
\eqref{const1}.  We will discuss and use this fact in section \ref{prueba}. In
particular, we have that $K^0_{ij}$ is trace-free
\begin{equation}
  \label{eq:62}
   K^0=0.
\end{equation}
That is, these initial data are maximal surfaces. 

In isotropic coordinates, the asymptotically flat end is given by the limit $r\to
\infty$ and the cylindrical end by the limit $r\to 0$.  The radial coordinate
$r$ is a good coordinate in the asymptotically flat end since the metric and
the extrinsic curvature take the asymptotic form \eqref{eq:7}.

On the other hand, in the limit $r \to 0$ the conformal factor $\Phi_0$ blows
up. This is, however, just a coordinate problem. To see this, let $s=-\ln r$, then the
cylindrical end corresponds to $s\to \infty$, and the metric has the
form
\begin{equation}
  \label{eq:2}
    h^0=(\sqrt{r}\Phi_0)^4\left(e^{2q_0}(ds^2+d\theta^2)+\sin^2\theta
     d\phi^2 \right).
\end{equation}
The functions $\sqrt{r}\Phi_0$ and $q_0$ are smooth and uniformly bounded in
the whole range $-\infty < s< \infty$ (see lemma \ref{desi}). In particular, the
Riemannian manifold $(S, h^0_{ij})$ has bounded curvature.

It is interesting to note (although we will not make use of it) that the metric
\eqref{eq:2} and the second fundamental form \eqref{tkcero} have a well defined
limit $s\to \infty$ as initial data. Namely
\begin{equation}
  \label{eq:55}
 h^0=m^2(1+\cos^2\theta) \left(ds^2+d\theta^2\right)
 +\frac{4m^2\sin^2\theta}{(1+\cos^2\theta)}  d\phi^2, \text{ as } s\to \infty, 
\end{equation}
where we have used the limits \eqref{eq:56}--\eqref{eq:57}. The extrinsic
curvature $K^{ij}_0$ has the form \eqref{tkcero} where $\omega_0$ is replaced
by its limiting value \eqref{eq:58} and all the other quantities are computed
with respect to the metric \eqref{eq:55}.  These are in fact solutions of the
constraint equations \eqref{const1}--\eqref{const2}. They isolate the
cylindrical geometry cutting off the asymptotically flat end. In particular,
the metric \eqref{eq:55} has non-negative Ricci scalar, given by the limit
\eqref{eq:58b} and it has another symmetry, namely translations in $s$.  These
limit initial data are slices $t=constant$ of the four dimensional vacuum
geometry described in \cite{Bardeen:1999px}, known as the the near-horizon
extreme Kerr. This geometry has also been studied in \cite{Carter73} (see
eq. (5.63) in that reference).

A relevant parameter for extreme black hole data is the area of the cylindrical
end. Consider the area $A(r)$ of the surfaces $r=constant$ of the metric
\eqref{thcero}. In the limit $r\to 0$ we have
\begin{equation}
  \label{eq:3}
  A_0= \lim_{r\to0}A(r)=8\pi m^2.
\end{equation}
For extreme Kerr, this corresponds to the area of the black hole event horizon.
Finally, for completeness, let us mention that the ergoregion on $S$ is given
in these coordinates by
\begin{equation}
  \label{eq:6}
  0<r < m\sin\theta.
\end{equation}

We have described a particular class of initial data sets for the extreme Kerr
black hole which run from $i_c$ to $i_0$. There exist similar initial data sets
in Reissner-Nordstr\"om and Kerr-Newman black hole.  Remarkably enough, for a
Schwarzschild black hole there also exist initial data that having the same
asymptotic geometry (see \cite{Hannam:2008sg} and references therein).  All
these examples are stationary.  Moreover, all these data arise as a singular
limit in which the geometry changes.  The first numerical evidence for the
existence of non-stationary cylindrical data with a similar structure as the
one described above was given in \cite{Dain:2008ck} and the first analytical
proof was provided in \cite{Dain:2008yu}, \cite{gabach09}. These data are also
obtained as a singular limit from non-extreme data. The point we want to
address to in this article is the following: given extreme Kerr initial data,
does there exist a neighborhood of similar data? The following theorem, which
constitutes the main result of this article, gives an affirmative answer to
this question.

\begin{theorem}
\label{t:family}
  Let $(S, h^0_{ij}, K^0_{ij})$ be the extreme Kerr data set described above
  with angular momentum $J$ and mass $m=\sqrt{|J|}$. Then there is a small
  $\lambda_0>0$ such that for $-\lambda_0<\lambda <\lambda_0$ there exists a family
  of initial data sets $(S, h_{ij}(\lambda), K_{ij}(\lambda))$ (i.e.,
  solutions of the constraints on $S$)
  with the following properties:
\begin{enumerate}[(i)]
\item We have $h_{ij}(0)= h^0_{ij}$ and $K_{ij}(0)=K^0_{ij}$. The family is
  differentiable in $\lambda$ and it is close to extreme Kerr with respect to
  an appropriate norm which involves two derivatives of the metric.

\item The data has the same asymptotic geometry as the extreme Kerr initial
  data set. The angular momentum and the area of the cylindrical end in the
  family do not depend on $\lambda$, they have the same value as in $(S,
  h^0_{ij}, K^0_{ij})$, namely $J$ and $8\pi |J|$ respectively.

\item The  data are axially symmetric and maximal (i.e. $K(\lambda)=0$). 

\end{enumerate}

\end{theorem}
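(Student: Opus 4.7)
The natural analytic strategy is the conformal method combined with the implicit function theorem recalled in appendix \ref{sec:impl-funct-theor}. Write the perturbed three-metric as $h_{ij}=\Phi^{4}\tilde h_{ij}$ and, imitating \eqref{tkcero}, parametrize the extrinsic curvature by a scalar twist potential $\omega$ via
\begin{equation*}
K_{ij}=\frac{2}{\eta}S_{(i}\eta_{j)},\qquad S_i=\frac{1}{\eta}\epsilon_{ijk}\eta^{j}\partial^{k}\omega.
\end{equation*}
As the paper notes, this ansatz is axisymmetric, trace-free, and automatically solves the momentum constraint \eqref{const1}, so property (iii) is built in from the start. Angular momentum is encoded in the asymptotic value of $\omega$ at the cylindrical end, so fixing that asymptotic data along the family preserves $J$.

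I would then choose a smooth one-parameter family of axisymmetric backgrounds $(\tilde h_{ij}(\lambda),\omega(\lambda))$ with $\tilde h_{ij}(0)=\tilde h^{0}_{ij}$, $\omega(0)=\omega_0$, whose $\lambda$-derivatives at $0$ lie in the relevant weighted spaces and whose asymptotic data at both ends are held constant in $\lambda$. With such a background frozen, the only remaining condition is the Hamiltonian constraint \eqref{const2}, which in the conformal variables becomes a Lichnerowicz-type equation
\begin{equation*}
F(u,\lambda):=-8\tilde\Delta_{\lambda}(\Phi_0+u)+\tilde R_{\lambda}(\Phi_0+u)-K_{ij}(\lambda)K^{ij}(\lambda)(\Phi_0+u)^{-7}=0
\end{equation*}
for the unknown deviation $u=\Phi-\Phi_0$ of the conformal factor, with $F(0,0)=0$ by \eqref{const2} for extreme Kerr.

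View $F$ as a smooth map from a neighborhood of $0$ in $\Ho\times\R$ into $\Lf$ and apply the implicit function theorem at $(0,0)$. The weights are designed to force decay of $u$ at spatial infinity and to control it at $r\to 0$ in a manner compatible with \eqref{eq:2}; the decay property proved in appendix \ref{sec:weight-sobol-spac} ensures that any $u(\lambda)\in\Ho$ is subdominant with respect to $\Phi_0$ at both ends, so the asymptotic geometry, and with it the cylindrical area $A_0=8\pi|J|$, is preserved, delivering property (ii). Smooth dependence on $\lambda$ yields property (i).

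The heart of the argument, and what I expect to be the main obstacle, is to show that the linearization
\begin{equation*}
Lu:=\partial_u F(0,0)u=-8\tilde\Delta_0 u+\bigl(\tilde R_0+7\,\Phi_0^{-8}K^{0}_{ij}K_{0}^{ij}\bigr)u
\end{equation*}
is an isomorphism $\Ho\to\Lf$. Classical weighted elliptic theory in the spirit of Cantor--McOwen--Bartnik handles the asymptotically flat end, but the cylindrical end is delicate: there $L$ is asymptotically modelled on a Schr\"odinger-type operator on $\R\times\mathbb{S}^{2}$ whose indicial exponents dictate which weights give Fredholm operators, and the weight $-1/2$ must be chosen to lie in the admissible range at both ends. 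Injectivity should then reduce to a positivity argument for the potential $\tilde R_0+7\Phi_0^{-8}K^{0}_{ij}K_{0}^{ij}$, which I expect to be non-negative for extreme Kerr data and which is, I suspect, exactly the content of appendix \ref{sec:appendix}, combined with a weighted maximum principle; surjectivity follows from the Fredholm index together with the formal self-adjointness of $L$. Once $L$ is shown to be an isomorphism, the implicit function theorem delivers the family, and properties (i)--(iii) can be read off from the construction together with the asymptotic control on $u(\lambda)$ provided by appendix \ref{sec:weight-sobol-spac}.
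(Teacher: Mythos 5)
Your plan coincides with the paper's proof: the same conformal/twist-potential reduction of the constraints to a single Lichnerowicz-type equation $G(\lambda,u)=0$, the implicit function theorem between $\Ho$ and $\Lf$, and the same identification of the crux — that the linearized potential at extreme Kerr is non-negative and exactly of order $r^{-2}$ at the cylindrical end, which is the content of Lemma \ref{lemaalpha} (it uses the stationarity of extreme Kerr and fails for generic cylindrical data) and feeds into the isomorphism result of \cite{gabach09}. The one substantive step you pass over by calling $F$ ``smooth'' is the verification that the map is $C^1$: because of the singular background functions $\Phi_0$, $q_0$, $\omega_0$ the standard composition theorems do not apply, and the paper must compute the Fr\'echet derivatives directly from their definition using the bounds of Lemmas \ref{decaimiento} and \ref{desi}.
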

In section \ref{prueba} we provide a more precise version of this theorem
(theorem \ref{teo}). Let us discuss here other relevant properties of the
initial data family $(S, h_{ij}(\lambda), K_{ij}(\lambda))$.

We mention that the angular momentum of the family remains constant, the total
mass however is not. As a consequence of the general theorems \cite{Dain06c}
\cite{Chrusciel:2007ak} we have the following inequality for all $\lambda$
\begin{equation}
  \label{eq:1}
  m(\lambda)\geq \sqrt{|J|},
\end{equation}
with equality only for $\lambda=0$ (i.e. for extreme Kerr). 
This family realizes the local minimum behavior of extreme Kerr studied in
\cite{Dain05d}. 

Inequality \eqref{eq:1} allows us to define the following positive quantity
\begin{equation}
  \label{eq:54}
  E(\lambda)= m(\lambda)- \sqrt{|J|}.
\end{equation}
The energy $E$ provides (if we assume cosmic censorship) an upper bound for the
total amount of radiation emitted by the system at null infinity for these 
initial data (see the discussion in \cite{Dain:2007pk}).

Let us consider now some aspects of the evolution of these data.  In the
asymptotically flat case, it is well know that the asymptotic behavior
\eqref{eq:7} is preserved by  evolution if we impose appropriate fall off
conditions for the lapse and shift. This is of course important, since it is
related to  conservation of  total mass in the spacetime. The natural
question is whether this kind of persistence under evolution also holds for the
cylindrical asymptote. To study this question we need  non-stationary data
as the ones constructed here. 

Let us consider a member of the family for some $\lambda \neq 0$ (we will
suppress the $\lambda$ in the notation in the following). Take a  
 short period of time $t$, then we have
\begin{align}
 h_{ij}(t) & \approx h_{ij}(0)+\dot h_{ij}(0)t.\\
 K_{ij}(t) & \approx K_{ij}(0)  +\dot K_{ij}  t,
\end{align}
where dot denotes time derivative. The time derivatives $\dot h_{ij},\dot
K_{ij}$ can be computed using the evolution equations
\begin{align}
\label{eq:8}
 \dot  h_{ij} &=2\alpha  K_{ij}+\mathcal L_{\beta}  h_{ij},\\
\label{eq:9} 
\dot  K_{ij}  &=\nabla_i\nabla_j\alpha+\mathcal L_{\beta}
 K_{ij}+\alpha(2 K^{k}_i K_{jk}- K  K_{ij}- R_{ij}), 
\end{align}
where $\alpha$ and $\beta^i$ are the lapse and shift of the foliation,
$\mathcal L$ denotes the Lie derivative and $ R_{ij}$ is the Ricci tensor of
$h_{ij}$.  If we want to preserve the cylindrical geometry under the evolution,
we must have
\begin{equation}
  \label{eq:4}
  \lim_{s\to \infty} \dot h_{ij}=0, \quad  \lim_{s\to
    \infty} \dot K_{ij}=0. 
\end{equation}
From equations \eqref{eq:8}--\eqref{eq:9} we deduce the following conditions
for  the lapse
\begin{equation}
  \label{eq:5}
\lim_{s\to \infty}  \alpha= \lim_{s\to \infty}  \partial \alpha=\lim_{s\to
  \infty} \partial^2 \alpha =0.  
\end{equation} 
and the shift 
\begin{equation}
  \label{eq:10}
 \lim_{s\to \infty} \beta^i = \lim_{s\to \infty}  \partial \beta^i  =0,
\end{equation}
where $\partial$ denotes partial derivatives with respect to the space
coordinates. 
Note that for the particular Boyer-Lindquist foliation in extreme Kerr these
requirements are satisfied (see equations \eqref{eq:65}--\eqref{eq:66} in
appendix \ref{sec:appendix}).  Conditions \eqref{eq:5} and \eqref{eq:10} are
analogous to the asymptotically flat conditions for lapse and
shift.

In this article, we have assumed vacuum for simplicity. We expect that an analogous
result as theorem \eqref{thcero} holds for the Kerr-Newman extreme black
hole. In that case, inequality \eqref{eq:1} should be replaced by it
generalized charged version recently proved in \cite{Chrusciel:2009ki}, 
\cite{Costa:2009hn}.   

\section{Proof of Main Result}
\label{prueba}
A particular feature of axial symmetry is that it allows one to  reduce the constraint
equations \eqref{const1}--\eqref{const2} to just one scalar equation for a conformal
factor (the so called Lichnerowicz equation). This procedure is well known
(see, for example, \cite{Dain06c} and reference therein).  Let us briefly
review it. Consider the metric
\begin{equation}
  \label{eq:105}
\tilde  h_{ij} = e^{-2q}(dr^2+r^2d\theta^2)+ r^2\sin^2\theta \,  d\varphi^2,
\end{equation}  
where $q=q(r,\theta)$ is an arbitrary function. 
This metric will be used as a conformal background for the physical metric $
h_{ij}$.  We first discuss how to construct solutions of the momentum
constraint \eqref{const1} from an arbitrary axially symmetric potential
$\omega(r,\theta)$. Consider
the following tensor
\begin{equation}
  \label{eq:42b}
  \tilde K^{ij}= \frac{2}{\rho^2} \tilde S^{(i} \eta^{j)}, 
\end{equation}
where 
\begin{equation}
  \label{eq:s}
 \tilde S^i=\frac{1}{2\rho^2}\tilde \epsilon^{ijk}\eta_j \partial_k \omega,
\end{equation}
and $\tilde \epsilon_{ijk}$ denotes the volume element with respect to $\tilde
h_{ij}$, $\tilde D$ is the connexion with respect to $\tilde h_{ij}$ and
$\rho=r\sin\theta$ is the cylindrical radius. The indices on tilde quantities
are moved with $\tilde h_{ij}$ and its inverse $\tilde h^{ij}$. The tensor
$\tilde K^{ij}$ is symmetric, trace free, and satisfies the following equation
(see, for example, the appendix in \cite{Dain99b})
\begin{equation}
  \label{eq:dk}
\tilde   D_i\tilde K^{ij}=0, 
\end{equation} 
for arbitrary $q$ and $\omega$. Equation \eqref{eq:dk} essentially solves (up
to a conformal factor) the momentum constraint \eqref{const1}. 
Assume that we have a solution $\Phi$ of Lichnerowicz equation
\begin{equation}
  \label{eq:Lich}
  \Delta_{\tilde h} \Phi -\frac{\tilde R}{8} \Phi=- \frac{\tilde K_{ij}\tilde
    K^{ij}}{8\Phi^7},  
\end{equation}
where  $\Delta_{\tilde h}$ is the Laplacian with respect to $\tilde h_{ij}$ and
$\tilde R$ is the Ricci scalar of  $\tilde h_{ij}$. Consider the rescaling
\begin{equation}
  \label{eq:61}
  h_{ij}=\Phi^4\tilde h^{ij},  \quad K_{ij}=\Phi^{-2} \tilde K_{ij}.
\end{equation}
Then, as a consequence of \eqref{eq:dk} the pair $(h_{ij}, K_{ij})$ satisfies
the constraints \eqref{const1}--\eqref{const2}. That is, the problem reduces to
solving equation \eqref{eq:Lich}. This equation can be written in the following
remarkably simple form in axial symmetry
\begin{equation}
  \label{eq:60}
   \Delta \Phi=-\frac{(\partial\omega)^2}{16\rho^4\Phi^7}-\frac{\Delta_2q}{4}\Phi,
\end{equation}
where, $\Delta$ and $\Delta_2$ are flat Laplace operators in three and two
dimensions respectively (see \eqref{eq:67}).  
In particular, extreme Kerr initial data satisfies this equation, namely
\begin{equation}
\label{eckerr}
 \Delta \Phi_0=-\frac{(\partial
   \omega_0)^2}{16\rho^4\Phi_0^7}-\frac{\Delta_2q_0}{4}\Phi_0.
\end{equation}
The idea is to perturb equation \eqref{eq:60} around the extreme Kerr
solution by taking
\begin{equation}
  \label{eq:63}
  q_0+\lambda q, \quad \omega_0 + \lambda \omega,
\end{equation}
for some fixed functions $q$ and $\omega$ and small $\lambda$, and then to
find a solution $u$ defined by
\begin{equation}
  \label{eq:64}
  \Phi =\Phi_0+u.
\end{equation}
Inserting \eqref{eq:63} and \eqref{eq:64} in equation \eqref{eq:60} and using
 \eqref{eckerr} we obtain our final equation
\begin{equation}
  \label{eq:12}
  G(\lambda,u)=0,
\end{equation}
where we have defined
\begin{equation}\label{eq}
G(\lambda,u)=\Delta u+\frac{(\partial w_0+\lambda\partial
  w)^2}{16\rho^4(\Phi_0+u)^7}-\frac{\partial
  w_0^2}{16\rho^4\Phi_0^7}+\lambda\frac{\Delta_2q}{4}(\Phi_0+u)+\frac{\Delta_2q_0}{4}u.
\end{equation}
Then, theorem \ref{t:family} is a direct consequence of the following existence
theorem for equation \eqref{eq:12}. 
\begin{theorem}\label{teo}
Let $w\in\cs{\Rt\setminus\Gamma}$ and $q\in\cs{\Rt\setminus \{0\}}$. Then,
there is $\lambda_0>0$ such that for all $\lambda\in (-\lambda_0, \lambda_0)$  there
exists a solution $u(\lambda)\in H^{'2}_{-1/2}$ of equation
\eqref{eq:12}. The solution $u(\lambda)$ is continuously differentiable in
$\lambda$ and it satisfies $\Phi_0+u(\lambda)>0$. Moreover, for small $\lambda$
and small $u$ (in the norm $ H^{'2}_{-1/2}$) the solution $u(\lambda)$ is the
unique solution of equation \eqref{eq:12}.   
\end{theorem}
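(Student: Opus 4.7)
The plan is to obtain $u(\lambda)$ by applying the implicit function theorem (appendix \ref{sec:impl-funct-theor}) to the nonlinear map $G$ of \eqref{eq}, regarded as an operator $G:\mathbb{R}\times \Ho\longrightarrow \Lf$ at the base point $(\lambda,u)=(0,0)$. That $G(0,0)=0$ is precisely the background identity \eqref{eckerr}. The weight choice is natural: $\Delta$ maps $H'^2_{\delta}$ continuously to $L'^2_{\delta-2}$, and $\delta=-1/2$ is a standard non-indicial weight for this asymptotically flat plus cylindrical end geometry.

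First I would verify that $G$ is well defined and $C^1$ on a neighborhood of $(0,0)$. The affine contributions $\Delta u$ and $(\Delta_2 q_0/4)u$, together with the $\lambda$-terms built from the test functions $q\in\cs{\Rtc}$ and $\omega\in\cs{\Rt\setminus\Gamma}$, present no difficulty since those test functions are compactly supported away from the singular sets. The delicate piece is the rational term $(\partial\omega_0+\lambda\partial\omega)^2/[16\rho^4(\Phi_0+u)^7]$: it requires a pointwise strict positive lower bound on $\Phi_0+u$ (secured by $\Phi_0>0$, the blow-up $\Phi_0\to\infty$ as $r\to0$, and the decay of $\Ho$-functions from appendix \ref{sec:weight-sobol-spac}) and careful behaviour near the axis $\rho=0$ and the cylindrical end $r=0$, both controlled by the explicit estimates on the background collected in appendix \ref{sec:appendix}.

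The heart of the argument is to show that
\begin{equation*}
Lv:=D_u G(0,0)\,v=\Delta v-\frac{7(\partial\omega_0)^2}{16\rho^4\Phi_0^8}\,v+\frac{\Delta_2 q_0}{4}\,v
\end{equation*}
is a bounded isomorphism $\Ho\to\Lf$. Boundedness and Fredholmness of index zero follow from the standard elliptic theory on weighted Sobolev spaces adapted to one asymptotically flat and one cylindrical end. For injectivity I would exploit the background identity \eqref{eckerr}: substituting $v=\Phi_0 w$ and using \eqref{eckerr} to eliminate $\Delta\Phi_0$, the equation $Lv=0$ takes the divergence form
\begin{equation*}
\nabla\cdot(\Phi_0^2\nabla w)=\frac{(\partial\omega_0)^2}{2\rho^4\Phi_0^6}\,w.
\end{equation*}
Pairing with $w$ and integrating by parts (with the boundary contributions at $r\to 0$ and $r\to\infty$ vanishing by the decay property of $\Ho$) yields
\begin{equation*}
\int \Phi_0^2\,|\nabla w|^2\,dx+\int\frac{(\partial\omega_0)^2}{2\rho^4\Phi_0^6}\,w^2\,dx=0.
\end{equation*}
Both terms are nonnegative, so $\nabla w\equiv 0$ and $w$ vanishes on the (non-empty) set where $\partial\omega_0\neq 0$, forcing $v\equiv 0$. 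Surjectivity then follows from index zero.

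Once $L$ is invertible, the implicit function theorem produces a unique $C^1$ curve $\lambda\mapsto u(\lambda)\in\Ho$ on some interval $(-\lambda_0,\lambda_0)$ with $u(0)=0$ and $G(\lambda,u(\lambda))=0$, while the strict positivity $\Phi_0+u(\lambda)>0$ persists by smallness in $\Ho$ together with appendix \ref{sec:weight-sobol-spac}. I expect the main obstacle to be the rigorous justification of boundedness and invertibility of $L$ in these weighted spaces: the simultaneous presence of an asymptotically flat and a cylindrical end, the axis weight $\rho^{-4}$ in the potential, and the blow-up of $\Phi_0$ at $r=0$ all conspire to make the integration-by-parts manipulation and the identification of the correct function-space setting delicate, and they are precisely the reasons for introducing the primed weighted Sobolev spaces and for proving the auxiliary results in appendices \ref{sec:weight-sobol-spac} and \ref{sec:appendix}.
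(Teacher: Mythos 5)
Your overall strategy coincides with the paper's: the same map $G$, the same spaces $X=\R$, $Y=\Ho$, $Z=\Lf$, the same base point $(0,0)$ with $G(0,0)=0$ from \eqref{eckerr}, and the implicit function theorem. Where you genuinely diverge is in the injectivity of the linearization $D_2G(0,0)=\Delta-\alpha$ with $\alpha=7(\partial\omega_0)^2/(16\rho^4\Phi_0^8)-\Delta_2q_0/4$. The paper does not perform your ground-state substitution $v=\Phi_0w$; instead it proves (Lemma \ref{lemaalpha}) that $\alpha=hr^{-2}$ with $h\geq0$ bounded, using \emph{both} the Hamiltonian constraint \eqref{eckerr} \emph{and} the stationarity identity $\Delta\Phi_0/\Phi_0=-(\partial\omega_0)^2/(4\eta^2)+(\partial\Phi_0)^2/\Phi_0^2$ satisfied by extreme Kerr, and then invokes the isomorphism theorem of \cite{gabach09} for operators of exactly this form. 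Your substitution, which converts $Lv=0$ into $\nabla\cdot(\Phi_0^2\nabla w)=(\partial\omega_0)^2w/(2\rho^4\Phi_0^6)$ using only \eqref{eckerr}, is algebraically correct (I checked the coefficient $8/16=1/2$) and is an attractive, more self-contained route to injectivity: it shows the quadratic form of $L$ is positive definite without ever establishing $\alpha\geq0$ pointwise. Be aware, though, that the real content is then hidden in the justification of the integration by parts: one only has $\nabla v\in L^2$ and $v/r\in L^2$ from the definition of $\Ho$ (Lemma \ref{decaimiento} gives no pointwise control of $\nabla v$ since $n-kp>0$ for $k=1$), and the finiteness of $\int\Phi_0^2|\nabla w|^2$ and the vanishing of the cutoff boundary terms rely on the extreme-Kerr-specific two-sided bound $\sqrt{m}\leq\sqrt{r}\Phi_0\leq\sqrt{2}\sqrt{r+m}$ of Lemma \ref{desi}. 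This is consistent with the paper's warning that the isomorphism step uses properties not valid for generic cylindrical data.

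The genuine gap is surjectivity. You assert that $\Delta-\alpha:\Ho\to\Lf$ is Fredholm of index zero by ``standard elliptic theory on weighted Sobolev spaces.'' That theory gives index zero at the non-exceptional weight $\delta=-1/2$ for the flat Laplacian, but here $\alpha$ is a \emph{critical} perturbation: near the cylindrical end $(\partial\omega_0)^2/(\rho^4\Phi_0^8)$ and $\Delta_2q_0$ are both exactly of order $r^{-2}$ (Lemma \ref{desi}), so the potential is scale-invariant there, shifts the indicial roots of the operator, and cannot be absorbed as a compact or decaying perturbation of $\Delta$. Whether $-1/2$ remains a Fredholm weight, and what the index is, depends on the size and sign of $h$ in $\alpha=hr^{-2}$; this is precisely what the result quoted from \cite{gabach09} establishes under the hypotheses $h\geq0$ bounded, which in turn is the content of Lemma \ref{lemaalpha}. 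So even though your substitution lets you bypass the positivity of $\alpha$ for injectivity, you cannot bypass it (or some equivalent quantitative control of the $r^{-2}$ coefficient) for the Fredholm/surjectivity half of the isomorphism; as written, that step of your proposal is unsupported. The remaining parts (well-definedness and $C^1$ regularity of $G$, positivity of $\Phi_0+u$, uniqueness from the IFT) correctly identify the issues the paper resolves with the identity $1/a^p-1/b^p=(b-a)\sum a^{i-p}b^{-1-i}$ and Lemmas \ref{decaimiento} and \ref{desi}, though you leave them at the level of a plan.
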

We have used the following notation: $\Gamma$ denotes the axis $\rho=0$,
$\cs{\Omega}$ are smooth functions with compact support in $\Omega$ and $
H^{'2}_{-1/2}$ denotes the Sobolev weighted spaces defined in appendix
\ref{sec:weight-sobol-spac}.

The fact  $\omega$ vanishes at the axis implies that the angular momentum
remains fixed for the whole family (see the discussion in \cite{Dain06c}). Also,
using lemma \ref{decaimiento}, from $u\in   H^{'2}_{-1/2}$ it follows that the
perturbation $u$ does not change the area of the cylindrical end at $r=0$. 

\begin{proof} 
  The proof uses the Implicit Function Theorem (see theorem \ref{t:ift} in
  appendix \ref{sec:impl-funct-theor}, in the rest of the proof we will follow
  the notation introduced in that theorem) for the map $G$ defined in equation
  \eqref{eq}. The proof is divided in two steps.

  In the first step, we find the appropriate Banach spaces $X$, $Y$ and $Z$
  required by theorem \ref{t:ift}, together with the neighborhoods $U\subset X$
  and $V\subset Y$, such that $G:V\times U\to Z $ defines a $C^1$ map.  The
  delicate part of this step is to take into account in the definition of the
  Banach spaces the fall off behavior at infinity and the singular behavior at
  the origin of the background functions $\Phi_0$, $q_0$ and $\omega_0$.  In
  particular, it is clear from the equation that we can not expect the solution
  $u$ to be regular at the origin, and hence standard Sobolev spaces are not
  appropriate.  Also, the presence of the singular background functions
  $\Phi_0$, $q_0$ and $\omega_0$ in the map $G$ prevents one from using
  standard theorems (for example the chain rule in Sobolev spaces) to prove
  that $G$ is $C^1$.  We need to explicitly compute the functional partial
  derivatives from their very definition as a limit. This makes this part of
  the proof laborious.  The asymptotic behavior of the background Kerr's
  functions is typical of any data with one asymptotically flat end and one
  cylindrical end and that is the main ingredient needed in this step.
  
  In the second step, we prove that the derivative $D_2G(0,0)$ is an isomorphism
  between $Y$ and $Z$. In this part we use very specific properties of extreme
  Kerr initial data (namely, lemma \ref{lemaalpha}) which are not valid for
  generic cylindrical data. See the comment after the proof of lemma
  \ref{lemaalpha}. This step represents the key part of the proof.

  \textbf{Step 1.} To handle both the fall off behavior at infinity and the
  singular behavior at the origin of the functions  $\Phi_0$, $q_0$ and
  $\omega_0$ we will make use of  weighted Sobolev spaces defined in 
  appendix \ref{sec:weight-sobol-spac}.  We choose $X=\R$, $Y=H'^2_{-1/2}$ and
  $Z= L^{'2}_{-5/2}$.  We also choose $U=\R$. It is clear that the map $G$ is only
  defined when $\Phi_0+u>0$. Hence, we need to find an appropriate
  neighborhood $V$ of $0$ in the Banach space $Y$ such that this condition is
  satisfied. Let us consider $V$ given by the open ball
\begin{equation}
  \label{eq:13}
  ||u||_ { H^{'2}_{-1/2}}<\xi,
\end{equation}
where the  constant $\xi$ is computed as follows. 
From lemma \ref{decaimiento} we have that for $u\in V$ 
\begin{equation}
  \label{eq:14}
  \sqrt{r}|u|\leq C_0\xi,
\end{equation}
where the constant $C_0$ is a Sobolev constant independent of $u$. By lemma 
\ref{desi} we have 
\begin{equation}
  \label{eq:15}
  \sqrt{r}\Phi_0\geq\sqrt{m}. 
\end{equation}
Then, if we choose $\xi$ such that
\begin{equation}
  \label{eq:16}
  \frac{\sqrt{m}}{C_0}>\xi>0,
\end{equation}
we have that for all $u\in V$
\begin{equation}
  \label{eq:17}
  \sqrt{r}(\Phi_0+u)\geq \sqrt{m}-C_0 \xi> 0. 
\end{equation}
The constant $\xi$ will remain fixed for the rest of the proof. 

We first  prove that $G:\mathbb{R}\times V\to  L^{'2}_{-5/2}
$  is well defined as a map. That is, we need to check that for $\lambda \in
\mathbb{R}$ and $u\in V$  we obtain $G(\lambda ,u)\in L^{'2}_{-5/2}$.
Let us compute the norm $L^{'2}_{-5/2}$ of $G(\lambda ,u)$.  Using the
definition \eqref{eq} and the triangle inequality we get
\begin{multline}\label{cotaG}
  \Lfn{G(\lambda ,u)}\leq \Lfn{\Delta u} + \Lfn{\frac{\lambda \partial\omega
      \left(2\partial \omega_0 +\lambda \partial
        \omega\right)}{16\rho^4(\Phi_0+u)^7}}+\\
  +  \frac{\lambda}{4}\Lfn{(\Phi_0+u)\Delta_2q}+\\
  +\Lfn{\frac{(\partial
      \omega_0)^2}{16\rho^4}\left[\frac{1}{(\Phi_0+u)^7}-\frac{1}{\Phi_0^7}\right]}+
  \frac{1}{4}\Lfn{u\Delta_2q_0}.
\end{multline}
From the definition of the $H^{'2}_{-1/2}$-norm it is clear that the first term in the
right-hand side of \eqref{cotaG} is bounded. For the second and third terms we
use the hypothesis that $\omega$ has compact support outside the axis and $q$
compact support outside the origin together with the lower bound \eqref{eq:17}
to conclude that these terms are also bounded.  The delicate terms are the
last two.

For the fourth term we proceed as follows. 
Using the following elementary identity  for real numbers $a$ and $b$
\begin{equation}
\label{eq:ei}
  \frac{1}{a^p}-\frac{1}{b^p}=(b-a)
  \sum_{i=0}^{p-1}a^{i-p}b^{-1-i}.
\end{equation}
we find that 
\begin{equation}
\label{eq:19}
r^{-4}\left(\frac{1}{\Phi_0^7}-\frac{1}{(\Phi_0+u)^7}\right)=uH,
\end{equation}
where $H$ is given by 
\begin{equation}
  \label{eq:18}
  H=\sum_{i=0}^6(\sqrt{r}(\Phi_0+u))^{i-7}(\sqrt{r}\Phi_0)^{-1-i}.
\end{equation}
Using  inequalities \eqref{eq:15} and \eqref{eq:17}  we obtain 
\begin{equation}
  \label{eq:20}
  H\leq C,
\end{equation}
where the constant $C$ depends only on the mass parameter $m$ of the background
extreme Kerr solution. In the following we will generically denote by $C$
constants depending at most on $m$. Then, we have
\begin{align}
\label{eq:21}
 \Lfn{\frac{(\partial
     \omega_0)^2}{\rho^4}\left[\frac{1}{(\Phi_0+u)^7}-\frac{1}{\Phi_0^7}\right]} & \leq
 \Lfn{\frac{C}{r^6}(r^4u H)}\\
&=C\Lon{u}\leq C\Hon{u}.
\end{align}
Where we have used the bound \eqref{eq:cotaw0} in Lemma
\ref{desi} to bound the factor with $\omega_0$ in the first inequality in
\eqref{eq:21}. The last inequality  in \eqref{eq:21} comes from the definition of the
weighted Sobolev space $\Ho$.

For the fifth term, which involves  $q_0$, we use the bound \eqref{cotaqcero}
in lemma \ref{desi}, to find
\begin{equation}
\label{eq:22}
 \Lfn{u \Delta_2q_0}\leq
 C\Lfn{\frac{u}{r^2}}=
 C\Lon{u}\leq  C\Hon{u}. 
\end{equation}
These computations show that all norms involved in
$\Lfn{G(\lambda,u)}$ are finite, hence $G:\R\times V\to\Lf$ is a well defined map.

We will now prove  that $G$ is $C^1$ between the mentioned Sobolev spaces. Let
us denote by $D_1G(\lambda,u)$ the partial Fr\'echet derivative of $G$ with
respect to the first argument evaluated at $(\lambda,u)$ and by
$D_2G(\lambda,u)$ the partial derivative with respect to the second
argument. By definition, the partial derivatives are linear operators between
the following spaces 
\begin{align}
  \label{eq:23}
D_1G(\lambda,u) &:\mathbb{R}\to \Lf, \\ 
D_2G(\lambda,u) & :\Ho\to \Lf.   
\end{align}
We use the notation $D_1G(\lambda,u)[\gamma]$ to denote the operator
$D_1G(\lambda,u)$ acting on $\gamma \in \mathbb{R}$. That is,
$D_1G(\lambda,u)[\gamma]$ defines a function on $\Lf$. In the same way we
denote by $D_2G(\lambda,u)[v]$ the operator acting on a function $v\in\Ho $. 
 
We propose as candidates for these partial derivatives the following linear operators
\begin{align}
D_1 G(\lambda,u)[\gamma] &=\left(\frac{2(\partial w_0+\lambda\partial w)\cdot \partial
  w}{16\rho^4(\Phi_0+u)^7}+\frac{\Delta_2 q}{4}(\Phi_0+u)\right)\gamma,\label{eq:pd1} \\
D_2 G(\lambda,u)[v] & =\Delta v + \left(-\frac{7(\partial w_0+\lambda\partial
  w)^2}{16\rho^4(\Phi_0+u)^8}+\lambda\frac{\Delta_2 q}{4}+\frac{\Delta_2
  q_0}{4}\right)v. \label{eq:pd2}
\end{align}
These operators arise by taking formally the following directional derivatives
to the map $G$
\begin{align}
  \label{eq:24}
  \frac{d}{dt}G(\lambda+t\gamma, u)|_{t=0} &=D_1 G(\lambda,u)[\gamma],\\
 \frac{d}{dt}G(\lambda, u+tv)|_{t=0} &=D_2 G(\lambda,u)[v].
\end{align}

To prove that the map $G:\R\times V \to Z$ is $C^1$ we need to prove the
following items:
 
\begin{enumerate}[(i)]

\item The linear operators \eqref{eq:pd1} and  \eqref{eq:pd2} are
  bounded, namely
  \begin{align}
    \label{eq:25}
     \Lfn{D_1 G(\lambda,u)[\gamma]} &\leq C_1 |\gamma|,\\
 \Lfn{D_2 G(\lambda,u)[v]} &\leq C_2 \Hon{v},
  \end{align}
where the constants $C_1$ and $C_2$ do not depend on $\gamma$ and $v$ respectively. 

\item The operators \eqref{eq:pd1} and \eqref{eq:pd2} are continuous in 
   $(\lambda,u)$ with respect to the operator norms. That is, for every $\delta>0$ there
  exists $\epsilon>0$ such that
  \begin{equation}
    \label{eq:26}
    |\lambda_1-\lambda_2|<\epsilon \Rightarrow  \opx{D_1 G(\lambda_1,u)- D_1
      G(\lambda_2,u)}<\delta,
  \end{equation}
and
\begin{equation}
  \label{eq:27}
  \Hon{u_1-u_2} <\epsilon \Rightarrow  \opy{D_1 G(\lambda,u_1) - D_1
    G(\lambda_2,u_2)}<\delta,  
\end{equation}
where the operator norms used in the right hand side of this inequalities are
defined in appendix \ref{sec:impl-funct-theor}. 

\item  The operators \eqref{eq:pd1} and \eqref{eq:pd2} are the partial Fr\'echet
  derivatives of $G$ (see the definition in appendix \ref{sec:impl-funct-theor}). That is
\begin{equation}
\label{eq:frechet1}
\lim_{\gamma\to0}\frac{\Lfn{G(\lambda+\gamma,u)-G(\lambda,u)-D_1
  G(\lambda,u)[\gamma]}}{|\gamma|}=0, 
\end{equation}
and
\begin{equation}
\label{eq:frechet2}
\lim_{v\to0}\frac{\Lfn{G(\lambda,u+v)-G(\lambda,u)-D_2G(\lambda,u)[v]}}{\Hon{v}}=0. 
\end{equation}
\end{enumerate}

By performing similar computations as above it is straightforward to prove (i) and
also the following estimate
\begin{equation}
  \label{eq:28}
   \Lfn{D_1 G(\lambda_1,u) - D_1 G(\lambda_2,u)}\leq C |\lambda_1-\lambda_2|,
\end{equation}
where $C$ does not depend on $\lambda_1$ and $\lambda_2$. From  inequality
\eqref{eq:28} the continuity with respect to $\lambda$ follows, equation 
\eqref{eq:26} of item (ii). In fact, estimate \eqref{eq:28} is a bit
stronger since it gives uniform continuity.

Continuity in the $u$ direction is more delicate. Using again the
identity \eqref{eq:ei} we have 
\begin{equation}
  \label{eq:29}
r^{-9/2} \left( \frac{1}{(\Phi_0+u_1)}-\frac{1}{(\Phi_0+u_2)}\right)=(u_2-u_1)H,
\end{equation}
where
\begin{equation}
  \label{eq:30}
  H=\sum_{i=0}^7(\sqrt{r}(\Phi_0+u_1))^{i-8}(\sqrt{r}(\Phi_0+u_2))^{-1-i}.
\end{equation}
Using that $u_1,u_2\in V$ and the lower bound \eqref{eq:17} we obtain 
\begin{equation}
  \label{eq:31}
  H\leq C.
\end{equation}
We use the upper bound  \eqref{eq:cotaw0}, together with \eqref{eq:31} to find
\begin{equation}
  \label{eq:32}
 \Lfn{D_1 G(\lambda,u_1) - D_1 G(\lambda_2,u_2)}\leq C \Lfn{\frac{v(u_1-u_2)}{r^{3/2}}}.   
\end{equation}
We bound the right hand side of \eqref{eq:32} as follows
\begin{align}
  \label{eq:33}
  \Lfn{\frac{v(u_1-u_2)}{r^{3/2}}} & = \left(\int_{\Rt} \frac{v^2(u_1-u_2)^2}{r}\,
    dx\right)^{1/2}, \\
&= \left(\int_{\Rt} \frac{(\sqrt{r}v)^2(u_1-u_2)^2}{r^2}\, 
    dx\right)^{1/2} \label{eq:33b}\\
& \leq C \Hon{v}\left(\int_{\Rt} \frac{(u_1-u_2)^2}{r^2}\,
    dx\right)^{1/2}\label{eq:33c}  \\
&  \leq C \Hon{v} \Hon{u_1-u_2}. \label{eq:33d}
\end{align}
Equation \eqref{eq:33} is just  the definition of the $\Lf$-norm and  equation
\eqref{eq:33b} is a trivial rearrangement of factors. The crucial inequality is
\eqref{eq:33c} where we have used lemma \ref{decaimiento}. Finally, line
\eqref{eq:33d} trivially follows from the definition of $\Ho$-norms. Hence, we obtain
our final inequality
\begin{equation}
  \label{eq:34}
 \Lfn{D_1 G(\lambda,u_1) - D_1 G(\lambda_2,u_2)}\leq C    \Hon{v} \Hon{u_1-u_2}.
\end{equation}
From this inequality, the continuity \eqref{eq:27} follows.  

We now prove (iii). The first limit \eqref{eq:frechet1} is straightforward. The
delicate part is the second limit \eqref{eq:frechet2}. We will follow a similar
argument as in the previous calculation. We first compute
\begin{multline}
  \label{eq:36}
G(\lambda,u+v)-G(\lambda,u)-D_2G(\lambda,u)[v]=\\
\frac{(\partial \omega_0+\lambda\partial \omega)^2}{16\rho^4}
\left(\frac{1}{(\Phi_0+u+v)^7}-\frac{1}{(\Phi_0+u)^7}+\frac{7v}{(\Phi_0+u)^8}\right) 
\end{multline}

We have
\begin{equation}
  \label{eq:37}
  r^{-9/2}\left(\frac{1}{(\Phi_0+u+v)^7}-\frac{1}{(\Phi_0+u)^7}+
\frac{7v}{(\Phi_0+u)^8}\right)=v^2H,      
\end{equation}
with
\begin{equation}
  \label{eq:39}
  H=\frac{1}{(\sqrt{r}(\Phi_0+u+v))^7(\sqrt{r}(\Phi_0+u))^8}
  \sum_{\substack{i+j+k=6\\ i,j,k\geq 0}} C_{ijk}(\sqrt{r}\Phi_0)^i
  (\sqrt{r}u)^j (\sqrt{r}v)^k, 
\end{equation}
where $C_{ijk}$ are numerical constants. To bound $H$ we use 
 the upper and lower bounds for $\Phi_0$ given by \eqref{cotasficero} and
the fact that $u,v\in V$ (and hence they satisfy the  bound \eqref{eq:14}).  We
obtain
\begin{equation}
  \label{eq:40}
  |H|\leq C \frac{(r+m)^{6/2}}{(\sqrt{(r+m)}-C_0\xi  )^{15}}\leq C.
\end{equation}
Then, we have
\begin{align}
  \label{eq:36b}
\Lfn{G(\lambda,u+v)-G(\lambda,u)-D_2G(\lambda,u)[v]} &\leq
C\Lfn{\frac{r^{9/2}v^2H}{r^6}},\\
&=\Lfn{\frac{v^2}{r^{3/2}}}. \label{eq:36c}
\end{align}
Using the same argument as we used in equation \eqref{eq:33}--\eqref{eq:33d} we
finally get the desired estimate
\begin{equation}
  \label{eq:42}
\Lfn{G(\lambda,u+v)-G(\lambda,u)-D_2G(\lambda,u)[v]} \leq C \left(\Hon{v}\right)^2.  
\end{equation}
From \eqref{eq:42} it follows \eqref{eq:frechet2}.

\emph{Step 2.} We will prove that $D_2G(0,0):H'^{2}_{-1/2}\to L'^{2}_{-5/2}$ is
an isomorphism.  We write this linear operator in the following form
\begin{equation}
\label{eq:43}
D_2G(0,0)[v]=\Delta v -\alpha v,
\end{equation}
where
\begin{equation}\label{alpha}
\alpha=7\frac{(\partial\omega_0)^2}{16\rho^4\Phi_0^8}-\frac{\Delta_2q_0}{4}.
\end{equation}
By lemma \ref{lemaalpha} we have that $\alpha=h r^{-2}$ where $h$ is a positive
and bounded function in $\Rt$. In \cite{gabach09} it has been proved that under
such conditions for $\alpha$ the map \eqref{eq:43} is an isomorphism between
$H'^{2}_{-1/2}$ and $L'^{2}_{-5/2}$.

We have satisfied all the hypothesis of the Implicit Function
Theorem. Hence, there exists a neighborhood $W=(-\lambda_0,\lambda_0)$ of the
origin in $\R$ such that the conclusion of theorem \ref{teo} holds.
\end{proof}

\textbf{Remarks:} We have imposed the perturbation functions $\omega$ and
$q$ to have compact support. This can be relaxed by requiring appropriate fall off
conditions at the axis and at the origin. 

The axially symmetric data considered here are not the most general one, since
we are assuming in the form of the metric \eqref{eq:105} that the axial Killing
vector is hypersurface orthogonal on the surface $S$ (but, of course, has a non
zero twist in the spacetime). This simplification allows one to use the
explicit expression's \eqref{eq:42b} for the second fundamental form. We expect
that this result can be generalized without this assumption. However, it is
important to emphasize that given a data as the one constructed in this
theorem, the time evolution described in section \ref{secpert}, under the
condition for lapse and shift \eqref{eq:5}--\eqref{eq:10}, will develope
initial data with the same asymptotic geometry for which the Killing vector is
not surface orthogonal. And hence we get from our family also non-trivial
initial data for which the Killing vector is not hypersurface orthogonal.

\section{Final Comments}
\label{sec:final-comments}

We have prove the existence of a initial data family close to extreme Kerr
black hole initial data. This family represent the natural initial data to
study the evolution near an extreme black hole in axial symmetry, in the spirit
of \cite{Dain:2008xr} \cite{Dain:2009wi}.

There exists also relevant open problems that can be address at the level of
the initial data. As we have seen in section \ref{secpert}, the extreme Kerr
black data lies outside the black hole region and hence they contain no trapped
surfaces. Does the family $(S,h_{ij}(\lambda), K_{ij}(\lambda))$ contains
trapped surfaces for $\lambda>0$? If these data have no trapped surfaces, then
there is a chance that they also lies outside the black hole region. This can,
of course, only been answered after the whole evolution has been analyzed. On
the other hand, if there are trapped surfaces, then the data necessarily
penetrate the black hole.  The formation of trapped surfaces for arbitrary
small $\lambda>0$ will indicate that extreme Kerr data is a very special
element in the family $(S,h_{ij}(\lambda), K_{ij}(\lambda))$.  In that case,
these kind of data could be very useful in the study of geometric inequalities
which relates angular momentum and area of trapped surfaces (see section 8 in
the review article \cite{Mars:2009cj}).

\section*{Acknowledgments}
S. D. thanks Piotr Chru\'sciel and Raffe Mazzeo for useful discussions. The
discussions with P. Chru\'sciel took place at the Institut Mittag-Leffler,
during the program ``Geometry, Analysis, and General Relativity'', 2008
fall. The discussions with R. Mazzeo took place at the Mathematisches
Forschungsinstitut Oberwolfach during the workshop ``Mathematical Aspects of
General Relativity'', October 11th -- October 17th, 2009.  S. D. thanks the
organizers of these events for the invitation and the hospitality and support
of the Institut Mittag-Leffler and Mathematisches Forschungsinstitut
Oberwolfach.

The authors want to thank Robert Beig for useful discussions that took place at
FaMAF during his visit in 2009.

S. D. is supported by CONICET (Argentina). M. E. G. C. is supported by a
fellowship of CONICET (Argentina). This work was supported in part by grant PIP
6354/05 of CONICET (Argentina), grant 05/B415 Secyt-UNC (Argentina) and the
Partner Group grant of the Max Planck Institute for Gravitational Physics,
Albert-Einstein-Institute (Germany).
\appendix

\section{Weighted Sobolev spaces}
\label{sec:weight-sobol-spac} 
The Bartnik's weighted Sobolev spaces $W'^{k,p}_\delta$  \cite{Bartnik86} are
appropriate for studying geometries with one cylindrical and one asymptotically
flat end. These functional spaces have weights both at infinity and at the
origin. 

The weighted Lebesgue spaces $L'^p_\delta$ are defined as the completion of
$\cs{\Rn\setminus\{0\}}$ functions under the norms
\begin{equation}
\label{eq:35}
\Vert f\Vert'_{p,\delta}=\left(\int_{\Rt\setminus\{0\}} |f|^pr^{-\delta p-n}dx\right)^{1/2}.
\end{equation}
The weighted Sobolev spaces  $W'^{k,p}_\delta$ are defined in the usual way
\begin{equation}
\label{eq:38}
\Vert f\Vert'_{k,p,\delta} =\sum_{0}^{m}\Vert D^jf\Vert_{p,\delta-j}.
\end{equation}
In this article we only use the cases $n=3$ and $p=2$, we have denoted these
spaces by $H'^k_\delta=W'^{k,2}_\delta$ and the norms by $ \Vert
f\Vert_{L'^2_\delta}= \Vert f\Vert'_{2,\delta}$ and $ \Vert
f\Vert_{H'^k_\delta}= \Vert f\Vert'_{k,2,\delta}$. 

The next lemma plays a crucial role in the proof of theorem \ref{teo}. 

\begin{lemma}\label{decaimiento}
Assume $u\in W'^ {k,p}_{\delta}$ with  $n-kp<0$, then we have the following estimate
 \begin{equation}
   \label{eq:41}
   r^{-\delta}|u|\leq C \left\lVert u  \right\rVert'_{k,p,\delta}. 
 \end{equation}
Moreover, we have
 \begin{equation}
\label{eq:44}
 \lim_{r\to 0}r^{-\delta}  |u|= \lim_{r\to \infty}r^{-\delta}  |u|= 0. 
\end{equation}
\end{lemma}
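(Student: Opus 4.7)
The plan is to reduce the pointwise bound \eqref{eq:41} to the standard Sobolev embedding on a fixed reference annulus via a dyadic rescaling argument, and then transfer the decay \eqref{eq:44} from smooth compactly supported functions to general elements of $W'^{k,p}_\delta$ by density combined with \eqref{eq:41}.

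First I would cover $\Rn\setminus\{0\}$ by the dyadic shells $A_j = \{x\in\Rn: 2^{j-1} < |x| < 2^{j+1}\}$, $j\in\mathbb{Z}$. Each shell is sent to the fixed reference annulus $A = \{y: 1/2 < |y| < 2\}$ by the dilation $y\mapsto 2^j y$. Setting $v_j(y) = u(2^j y)$ and tracking the Jacobian $dx = 2^{jn}\,dy$, the derivative rescaling $D_y^i v_j = 2^{ji}(D_x^i u)(2^j\cdot)$, and the fact that $r\sim 2^j$ on $A_j$ so that $r^{-(\delta-i)p - n} \sim 2^{-j((\delta-i)p + n)}$, a direct change of variables gives
\begin{equation*}
\|v_j\|_{W^{k,p}(A)} \leq C\,2^{j\delta}\,\|u\|_{W'^{k,p}_\delta(A_j)},
\end{equation*}
with $C$ depending only on $n$, $k$, $p$, $\delta$. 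The hypothesis $n - kp < 0$ is exactly the Morrey condition for the embedding $W^{k,p}(A) \hookrightarrow L^\infty(A)$ on the bounded reference domain, yielding $\|v_j\|_{L^\infty(A)} \leq C\|v_j\|_{W^{k,p}(A)}$. Combining these and using $r\sim 2^j$ on $A_j$ one more time produces
\begin{equation*}
\sup_{x\in A_j} r^{-\delta}|u(x)| \leq C\,\|u\|_{W'^{k,p}_\delta(A_j)} \leq C\,\|u\|'_{k,p,\delta},
\end{equation*}
and taking the supremum over $j\in\mathbb{Z}$ establishes \eqref{eq:41}.

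For the decay \eqref{eq:44}, I would use that by definition $W'^{k,p}_\delta$ is the completion of $\cs{\Rn\setminus\{0\}}$ in the norm $\|\cdot\|'_{k,p,\delta}$. Given $u$ in this space, pick a sequence $u_m \in \cs{\Rn\setminus\{0\}}$ with $\|u - u_m\|'_{k,p,\delta} \to 0$. The function $r^{-\delta}u_m$ is supported in a compact subset of $\Rn\setminus\{0\}$, so both limits in \eqref{eq:44} hold trivially for $u_m$. Applying the pointwise bound just proved to $u - u_m$ yields
\begin{equation*}
\sup_{x\in\Rn\setminus\{0\}} r^{-\delta}\bigl|u(x) - u_m(x)\bigr| \leq C\,\|u - u_m\|'_{k,p,\delta} \longrightarrow 0,
\end{equation*}
so $r^{-\delta}u_m \to r^{-\delta}u$ uniformly on $\Rn\setminus\{0\}$, and the vanishing at $0$ and $\infty$ passes to the limit.

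The one point demanding care is the uniformity in $j$ of the rescaling estimate: one must check that the Jacobian, the derivative powers, and the weight $r^{-\delta p - n}$ combine so that exactly a single factor $2^{j\delta}$ appears on the right, which is then absorbed by $r^{-\delta}\sim 2^{-j\delta}$ on the left of \eqref{eq:41}. This bookkeeping is the only nontrivial step; once it is in place, everything else reduces to the standard Morrey embedding on a fixed bounded domain together with the density definition of the space.
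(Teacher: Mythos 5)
Your argument for the pointwise bound \eqref{eq:41} is correct and is essentially the proof in the paper: the same dyadic decomposition into annuli, the same rescaling to a fixed reference annulus (the paper invokes Bartnik's scaling identity $\Vert u_R\Vert'_{k,p,\delta;A_1}=R^\delta\Vert u\Vert'_{k,p,\delta;A_R}$ where you carry out the change of variables by hand, but the bookkeeping is identical), and the same Morrey embedding on the bounded reference domain under the condition $n-kp<0$. For the decay statement \eqref{eq:44}, however, you take a genuinely different route. The paper stays with the dyadic decomposition: writing $u_j=u|_{A_{2^j}}$, it observes that $\sum_{j\in\mathbb{Z}}(\sup r^{-\delta}|u_j|)^p\leq C\Vert u\Vert'^p_{k,p,\delta}<\infty$, so the terms of this convergent series must tend to zero as $j\to\pm\infty$, which is exactly \eqref{eq:44}. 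You instead use that $W'^{k,p}_\delta$ is by definition the completion of $\cs{\Rn\setminus\{0\}}$, approximate $u$ by compactly supported smooth $u_m$ for which both limits are trivial, and transfer the decay by the uniform convergence supplied by \eqref{eq:41}. Both arguments are valid. The paper's series argument is self-contained and does not invoke density, but it does rely on the norm decomposing as an $\ell^p$ sum over the disjoint annuli; your density argument is slightly more robust (it would work for any space defined as such a completion once the uniform estimate is in hand, and it also makes explicit why \eqref{eq:41} applies to abstract elements of the completion in the first place), at the cost of using the definition of the space as a completion rather than only the finiteness of the norm.
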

We will use this lemma only for the particular case $p=2$, $n=3$, $k=2$ and
$\delta=-1/2$, we state however the proof for the general case since it can
have other applications.
\begin{proof}
  This proof is adapted from \cite{Bartnik86}, Theorem 1.2, where the statement
  is proved for weighted spaces at infinity  (namely, $W ^{k,p}_\delta$
  spaces in the notation of \cite{Bartnik86}). 

Let $B_R$ be the ball of radius $R$ centered at the origin, and
let $A_R$ be  the annulus  $A_R=B_{2R}\setminus B_R$. We define the rescaled function
\begin{equation}
\label{eq:49}
u_R(x):=u(Rx). 
\end{equation}
Then, the fundamental scaling property of the spaces $W'^{k,p}_\delta$
(cf. equation after equation (1.3) in \cite{Bartnik86})  is given by 
\begin{equation}
\label{eq:50}
\Vert u_R\Vert_{k,p,\delta;A_1}=R^\delta\Vert u\Vert_{k,p,\delta;A_R},
\end{equation}
where we have used the same notation  as in  \cite{Bartnik86} for norms over
subsets of $\Rn$. 

We have
\begin{align}
  \label{eq:48}
  \sup_{A_R}r^{-\delta}|u| & =\sup_{A_1}R^{-\delta}r^{-\delta}|u_R|,\\
&\leq C R^{-\delta}\Vert r^{-\delta}u_R\Vert_{k,p;A_1}, \label{eq:48b}\\
&\leq CR^{-\delta} \Vert u_R\Vert'_{k,p,\delta;A_1},\label{eq:48c} \\
&= C \Vert u\Vert'_{k,p,\delta;A_R}.\label{eq:48d}
\end{align}
The line \eqref{eq:48} is a trivial change of coordinates. For the inequality
\eqref{eq:48b} we have used the standard Sobolev estimate on the bounded domain
$A_1$, which is valid for $n-kp<0$. We have denoted the standard Sobolev norm
on a domain $\Omega$ by $\Vert \cdot\Vert_{k,p;\Omega}$.  It is important to
note that the constant $C$ does not depend on $R$, since the domain $A_1$ does
not either. The inequality in \eqref{eq:48c} is trivial because on the domain
$A_1$ the two norms (standard and weighted) are equivalent. Finally, in
\eqref{eq:48d} we applied the scaling property \eqref{eq:50}.

Consider the set of annulus $A_{2^j}$ and define $u_j=u|_{A_{2^j}}$. It is
clear that 
\begin{equation}
  \label{eq:51}
  u=\sum_{j=-\infty}^{\infty}u_j.
\end{equation}
Then, we use the estimate \eqref{eq:48} on $A_{2^j}$ and sum over all $j$
\begin{align}
  \label{eq:52}
(\sup r^{-\delta}|u|)^p  \leq  \sum_{j=-\infty}^{\infty}
(\sup r^{-\delta}|u_j|)^p & \leq
 C\sum_{j=-\infty}^{\infty} \Vert u_j\Vert'^p_{k,p,\delta},\\
& = C  \Vert u \Vert'^p_{k,p,\delta}.
\end{align}
which proves \eqref{eq:41}. 

To prove \eqref{eq:44} we observe that the sum $  \sum_{j=-\infty}^{\infty}
(\sup r^{-\delta}|u_j|)^p $ is an infinite sum of positive real numbers which
is bounded, hence in the limit we must have
\begin{equation}
  \label{eq:53}
 \lim_{j\to \pm\infty} (\sup r^{-\delta}|u_j|)=0,
\end{equation}
which is equivalent to \eqref{eq:44}. 
\end{proof}

\section{Properties of extreme Kerr initial data}
\label{sec:appendix}
The spacetime metric for extreme Kerr black hole  in
Boyer-Lindquist coordinates $(t,\tilde{r},\theta,\phi)$, is given by
\begin{equation}
\label{eq:69}
  g=-\frac{\Delta\sin^2\theta}{\eta}dt^2+\eta(d\phi-\Omega
  dt)^2+\frac{\Sigma}{\Delta}d\tilde r^2+\Sigma d\theta^2 
\end{equation}
where $\eta$ is the square norm of the axial Killing vector 
\begin{equation}
\label{eq:59}
\eta^ \mu=\left(\frac{\partial}{\partial\phi}\right)^ \mu, \quad
\eta =g_{\nu\mu}\eta^\mu\eta^\mu,
\end{equation}
given by
\begin{equation}
  \label{eq:59b}
 \eta =\frac{(\tilde
  r^2+a^2)^2-a^2\Delta\sin^2\theta}{\Sigma}\sin^2\theta. 
\end{equation}
The functions $\Delta$ and $\Sigma$ are given by
\begin{equation}
 \Delta=(\tilde{r}-m)^2,\quad \Sigma=\tilde{r}^2+a^2\cos^2\theta,
\end{equation}
and $\Omega$ is the angular velocity 
\begin{equation}
 \Omega=\frac{2a^2\tilde r\sin^2\theta}{\eta\Sigma}.
\end{equation}
Here $a=J/m$ is the angular momentum per unit mass, and we consider the extreme
case $\sqrt|J|=m$. Note that for extreme Kerr we have two possible values for the
angular momentum $J=\pm m^2$ (and hence $a=\pm m$). 

Take a surface $t=constant$, define the  radius $r$ as  $r=\tilde{r}-m$. From 
\eqref{eq:69} we deduce that the  intrinsic metric on this surface has the form
\eqref{thcero} with 
\begin{equation}\label{ficero}
 e^{2q_0}=\frac{\Sigma\sin^2\theta}{\eta}, \quad
 \Phi_0^4=\frac{\eta}{\rho^2}. 
\end{equation}
The twist potential of the Killing vector $\eta^\mu$ is given by 
\begin{equation}\label{wcero}
\omega_0 =2J(\cos^3\theta-3\cos\theta)-\frac{2Jm^2\cos\theta\sin^4\theta}{\Sigma}.
\end{equation}
 The lapse function and shift vector for this foliation are given by
\begin{align}
\label{eq:65}
  \alpha &=\frac{r}{\sqrt{\Sigma+a^2(1+2a(r+a)/\Sigma)\sin^2\theta}},\\
\label{eq:66}
 \beta^\phi &=-\frac{2a^2\sin^2\theta(r+a)}{\Sigma^3}r^2.
\end{align}
The following asymptotic limit are interesting
\begin{align}
  \label{eq:56}
  \lim_{r\to0} \sqrt{r}\Phi_0 &= \left(\frac{4m^2}{1+\cos^2\theta}\right)^{1/4},\\
  \label{eq:57}
   \lim_{r\to0}  e^{2q_0} &= \left(\frac{1+\cos^2\theta}{2}\right)^{2},\\
  \label{eq:58}
   \lim_{r\to0} \omega_0 &=-\frac{8J\cos\theta}{1+\cos^2\theta},\\
 \lim_{r\to0} R  & = \frac{2\sin^2\theta}{m^2(1+\cos^2\theta)^3}. \label{eq:58b}
\end{align}
We take the opportunity to correct a misprint in equation A.15 of
\cite{Avila:2008te}. There is a missing exponent $3$ in the denominator of this
formula, it should be the same as equation \eqref{eq:58b}.
 
In the following, we 
use $\Delta$ to denote the flat Laplace operator in three dimensions, the two
dimensional Laplacian $\Delta_2$ is given by
\begin{equation}
  \label{eq:67}
 \Delta_2=  \frac{1}{r}\partial_r(r\partial_r )+ \frac{1}{r^2}\partial^2_\theta . 
\end{equation}

The next lemma plays a crutial role in the proof of theorem \ref{teo}. 
\begin{lemma}
\label{lemaalpha}
  Let $q_0$ and $\Phi_0$ be given by \eqref{ficero} and $\omega_0$ by
  \eqref{wcero}. Then the function $\alpha$ defined in \eqref{alpha},
has the form $\alpha=hr^{-2}$ where $h\geq 0$ and $h$ is bounded in $\Rt$.
\end{lemma}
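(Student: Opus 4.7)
My plan is direct computation from the formulas \eqref{ficero} and \eqref{wcero}, preceded by two simplifications. First, $\Phi_0^4 = \eta/\rho^2$ gives the identity $\rho^4 \Phi_0^8 = \eta^2$, which rewrites \eqref{alpha} as
$$\alpha = \frac{7(\partial\omega_0)^2}{16\,\eta^2} - \frac{\Delta_2 q_0}{4}.$$
Second, setting $B := ((r+m)^2+m^2)^2 - m^2 r^2 \sin^2\theta$, the relation $e^{2q_0} = \Sigma\sin^2\theta/\eta = \Sigma^2/B$ gives the closed form $q_0 = \ln \Sigma - \tfrac{1}{2}\ln B$. Both $\Sigma$ and $B$ are strictly positive polynomials on $\mathbb{R}^3$, so $q_0$ is smooth and uniformly bounded there, and the only possible unbounded piece of $\Delta_2 q_0$ will come from the $r^{-2}\partial_\theta^2$ part of $\Delta_2$.

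I would then analyze $h := r^2 \alpha$ in the singular regions. As $r\to 0$ the limits \eqref{eq:56}--\eqref{eq:58} show that $q_0$, $\sqrt{r}\Phi_0$ and $\omega_0$ all converge to smooth functions of $\theta$; consequently both $r^2\Delta_2 q_0$ and $r^2(\partial\omega_0)^2/\eta^2$ are bounded at the origin with bounded angular profiles. Near the axis, where $\eta$ vanishes like $\sin^2\theta$, the factorization $\cos^3\theta-3\cos\theta+2 = (\cos\theta-1)^2(\cos\theta+2)$ together with the $\cos\theta\sin^4\theta$ factor in \eqref{wcero} shows $(\partial\omega_0)^2$ vanishes to order $\sin^6\theta$, so $(\partial\omega_0)^2/\eta^2$ not only stays finite but actually vanishes as $O(\sin^2\theta)$ at the axis. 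As $r\to\infty$ both terms decay like $r^{-4}$ or faster, and smoothness on compact annuli in $\mathbb{R}^3\setminus\{0\}$ then gives uniform boundedness of $h$.

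For the non-negativity $h\geq 0$ I would invoke the Lichnerowicz equation \eqref{eckerr}. Dividing \eqref{eckerr} by $\Phi_0$ and using $\rho^4\Phi_0^8=\eta^2$ yields
$$\alpha = \frac{(\partial\omega_0)^2}{2\,\eta^2} + \frac{\Delta \Phi_0}{\Phi_0},$$
whose first summand is manifestly non-negative. A leading-order expansion at $r\to 0$, $\theta=\pi/2$ already gives $\Delta\Phi_0/\Phi_0 \sim -3/(4r^2)$, so $\Phi_0$ is not subharmonic and one must rely on cancellation between the two terms. The remaining step is then a direct algebraic verification from the rational expressions for $(\partial\omega_0)^2$, $\eta$, $\Sigma$, $B$. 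As sanity checks, on the axis the first term vanishes and $h$ reduces to $m^2(r+2m)^2/[4((r+m)^2+m^2)^2]$, manifestly $\geq 0$; at $r\to 0$, $\theta=\pi/2$ the leading value of $h$ computes to $5/4$; and at infinity $h\to 0$.

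The main obstacle is precisely this pointwise non-negativity. Boundedness and the $r^{-2}$ structure are largely bookkeeping once the above simplifications are in place, but producing the global algebraic identity that exhibits $h \geq 0$ uses the extremality condition $m = \sqrt{|J|}$; this is the step where the specific structure of extreme Kerr (rather than any generic axisymmetric cylindrical-end data) genuinely enters, which is why the lemma is singled out in step 2 of the proof of theorem \ref{teo}.
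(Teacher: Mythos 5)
Your reduction $\alpha = \frac{(\partial\omega_0)^2}{2\eta^2} + \frac{\Delta\Phi_0}{\Phi_0}$ via the Lichnerowicz equation, and the boundedness/decay analysis of $h=r^2\alpha$ at the origin, on the axis and at infinity, are correct (your sanity checks, e.g.\ $h\to 5/4$ as $r\to 0$ at $\theta=\pi/2$ and the $O(\sin^2\theta)$ vanishing of $(\partial\omega_0)^2/\eta^2$ on the axis, do check out, and your boundedness discussion is in fact more explicit than the paper's, which simply asserts the orders $O(r^{-2})$ and $O(r^{-4})$ "by explicit means"). But the proposal stops exactly where the content of the lemma lies: having observed that $\Delta\Phi_0/\Phi_0\sim -3/(4r^2)$ is negative and of the same order as the manifestly positive term, you defer the global inequality $h\geq 0$ to "a direct algebraic verification from the rational expressions" that is never carried out. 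That is a genuine gap; without it the non-negativity claim is unproved.

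The paper closes this gap not by brute-force algebra but by one further structural identity: since extreme Kerr is a \emph{stationary} solution, the pair $(\Phi_0,\omega_0)$ satisfies the harmonic-map (Ernst) equation
\begin{equation*}
\frac{\Delta\Phi_0}{\Phi_0}=-\frac{(\partial\omega_0)^2}{4\eta^2}+\frac{(\partial\Phi_0)^2}{\Phi_0^2}
\end{equation*}
(see \cite{Avila:2008te}). Substituted into your own expression for $\alpha$, this yields
$\alpha=\frac{(\partial\omega_0)^2}{4\eta^2}+\frac{(\partial\Phi_0)^2}{\Phi_0^2}$,
a manifest sum of squares, so $h\geq 0$ is immediate and no pointwise algebraic miracle is needed. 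This also corrects your attribution of where the special structure enters: the identity above uses stationarity, which holds for non-extreme Kerr as well; the extremality condition $m=\sqrt{|J|}$ enters only through the global coordinate and topological structure (the data covering all of $\Rt\setminus\{0\}$ with the bounds of lemma \ref{desi}), which is what makes $h$ bounded on all of $\Rt$ and why, as the paper remarks after the lemma, the argument fails in the non-extreme case.
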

\begin{proof}
From the Hamiltonian constraint
\begin{equation}
-\frac{\Delta_2q_0}{4}=\frac{\Delta\Phi_0}{\Phi_0}+\frac{(\partial w_0)^2}{16\eta^2}.
\end{equation}
and the stationary equation satisfied by extreme Kerr's initial data (see \cite{Avila:2008te})
\begin{equation}
\frac{\Delta\Phi_0}{\Phi_0}=-\frac{(\partial
  \omega_0)^2}{4\eta^2}+\frac{(\partial\Phi_0)^2}{\Phi_0^2} 
\end{equation}
we obtain
\begin{equation}
-\frac{\Delta_2q_0}{4}=-\frac{3}{16}\frac{(\partial
  w_0)^2}{\eta^2}+\frac{(\partial \Phi_0)^2}{\Phi_0^2}. 
\end{equation}
Therefore
\begin{equation}\label{alpha1}
\alpha=\frac{(\partial \omega_0)^2}{4\eta^2}+\frac{(\partial \Phi_0)^2}{\Phi_0^2},
\end{equation}
which is clearly a non negative quantity. By an explicit calculation it can be
seen that $\alpha$ is in fact a strictly positive function. Since we do not need
this property for our purposes, we omit the details.
 Also by explicit means, we note that $\alpha$ is $O(r^{-2})$
at the origin, and $O(r^{-4})$ at infinity, being otherwise bounded. Thereby,
there must exist a positive function $h$ such that $\alpha=hr^{-2}$.
\end{proof}
It is important to note that in the proof of lemma \ref{lemaalpha} we have used
the fact that extreme Kerr satisfies the stationary Einstein equations and also
that the topology of extreme Kerr allows us to choose these coordinates. In
particular, the proof fails for non-extreme Kerr. See a  similar discussion
in  \cite{Dain05c} at the end of page 6868.

\begin{lemma}\label{desi}
  Let $\Phi_0$, $q_0$ and $\omega_0$ be defined by \eqref{ficero} and \eqref{wcero}, and
assume that $m>0$. Then we have the following bounds:

\begin{align}\label{cotasficero}
 \sqrt{m}& \leq \sqrt{r+m}  \leq \sqrt{r} \Phi_0\leq \sqrt{2} \sqrt{r+m} ,\\
\label{eq:cotaw0}
\frac{(\partial \omega_0)^2}{\rho^4} &\leq116\frac{m^4}{r^6},\\
\label{cotaqcero}
|\Delta_2 q_0|& \leq\frac{90}{r^2}.
\end{align}
\end{lemma}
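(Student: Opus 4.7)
The plan is to prove each of the three bounds by direct computation from the explicit formulas \eqref{ficero} and \eqref{wcero}, together with the extreme case identities $a=\pm m$, $\Delta = r^2$, and $\Sigma = (r+m)^2 + m^2\cos^2\theta$. Nothing conceptually deep is needed; the difficulty is purely organizational, since the sharp constants $116$ and $90$ require careful term-by-term bookkeeping.

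For \eqref{cotasficero}, I would first rewrite, using \eqref{eq:59b} and $\rho = r\sin\theta$,
\begin{equation*}
r^{2}\Phi_{0}^{4} \;=\; \frac{\eta}{\sin^{2}\theta} \;=\; \frac{\bigl((r+m)^{2}+m^{2}\bigr)^{2} - m^{2}r^{2}\sin^{2}\theta}{(r+m)^{2}+m^{2}\cos^{2}\theta}.
\end{equation*}
The lower bound $\sqrt{r+m}\leq \sqrt{r}\,\Phi_{0}$ is equivalent to $(r+m)^{2}\leq r^{2}\Phi_{0}^{4}$, which after clearing the denominator reduces to the polynomial inequality $m^{2}(r+m)^{2}(1+\sin^{2}\theta) + m^{4} \geq m^{2}r^{2}\sin^{2}\theta$; this is obvious from $(r+m)^{2}\geq r^{2}$. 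The upper bound $\sqrt{r}\,\Phi_{0}\leq \sqrt{2}\sqrt{r+m}$ is equivalent to $r^{2}\Phi_{0}^{4} \leq 4(r+m)^{2}$, which after a similar manipulation reduces to an inequality in $\cos^{2}\theta$ that is checked using $\cos^{2}\theta,\sin^{2}\theta\in[0,1]$. The remaining inequality $\sqrt{m}\leq\sqrt{r+m}$ is trivial.

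For \eqref{eq:cotaw0}, I would compute the partial derivatives of $\omega_{0}$ explicitly. The key qualitative observation is that both first-order terms vanish on the axis to enough order to compensate $\rho^{4}=r^{4}\sin^{4}\theta$: one finds $\partial_{r}\omega_{0}=\frac{4Jm^{2}(r+m)\cos\theta\sin^{4}\theta}{\Sigma^{2}}$ which carries a factor $\sin^{4}\theta$, and $\partial_{\theta}\omega_{0}$, after differentiating $2J(\cos^{3}\theta-3\cos\theta)$ and the second term of \eqref{wcero}, can be factored to carry a $\sin^{3}\theta$. Using $|\partial\omega_{0}|^{2}=(\partial_{r}\omega_{0})^{2} + r^{-2}(\partial_{\theta}\omega_{0})^{2}$, together with $J^{2}=m^{4}$ and $\Sigma\geq (r+m)^{2}\geq r^{2}$, each resulting summand is bounded by a numerical multiple of $m^{4}/r^{6}$; adding the resulting constants yields the stated bound $116\,m^{4}/r^{6}$.

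For \eqref{cotaqcero}, I would start from $q_{0}=\tfrac{1}{2}\log(\Sigma\sin^{2}\theta) - \tfrac{1}{2}\log\eta$, an explicit function of $r$ and $\theta$, and apply \eqref{eq:67} to compute $\Delta_{2}q_{0}$ directly. Each term produced in the calculation contains a denominator which is a power of $\Sigma$ or $\eta$; these are bounded below using $\Sigma\geq (r+m)^{2}\geq r^{2}$ and $\eta\geq (r+m)^{2}\sin^{2}\theta$ (which is itself a consequence of \eqref{cotasficero}), while the numerators are bounded above by trivial trigonometric estimates. Summing the contributions produces $|\Delta_{2}q_{0}|\leq 90/r^{2}$. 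The main obstacle throughout is simply keeping track of the many terms and extracting the correct universal constants; an alternative route for this last bound is to use the identity $-\tfrac14\Delta_{2}q_{0} = -\tfrac{3}{16}(\partial\omega_{0})^{2}/\eta^{2} + (\partial\Phi_{0})^{2}/\Phi_{0}^{2}$ derived in Lemma \ref{lemaalpha}, which converts the estimate on $\Delta_{2}q_{0}$ into one on $\partial\omega_{0}$ and $\partial\Phi_{0}$ of a kind already handled above.
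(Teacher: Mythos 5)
Your proposal is correct and follows essentially the same route as the paper: all three bounds are obtained by brute-force estimation of the explicit extreme-Kerr expressions, using $\Sigma\ge(r+m)^2$ and the vanishing of $\partial\omega_0$ on the axis to absorb the $\rho^4$ in the denominator. The one organizational difference is in \eqref{eq:cotaw0}: the paper writes $(\partial\omega_0)^2$ as a single closed-form expression $4m^4\rho^6F/(r^8\Sigma^4)$ and bounds the polynomial $F$ by $29(r+a)^8$ in one stroke, whereas you estimate $\partial_r\omega_0$ and $r^{-1}\partial_\theta\omega_0$ separately; this is equally valid in principle, but if the three summands of $\partial_\theta\omega_0$ are bounded term by term it tends to yield a constant larger than $116$ (the precise constant is immaterial for the application in Theorem \ref{teo}, but the lemma as stated pins it down, so you would need the grouped form or a sharper angular estimate using $\Sigma\ge m^2(1+\cos^2\theta)$). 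For \eqref{cotasficero} the paper simply cites \cite{Avila:2008te}, and for \eqref{cotaqcero} it declares the computation ``laborious but straightforward,'' so your sketches actually supply more detail than the paper itself; your alternative derivation of \eqref{cotaqcero} from the identity in Lemma \ref{lemaalpha} is a nice observation, though it still requires an independent bound on $(\partial\Phi_0)^2/\Phi_0^2$ and again makes the specific constant $90$ delicate.
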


\begin{proof}

 Inequality  \eqref{cotasficero}  has been proved in \cite{Avila:2008te} (see equations
   (10) and (12) in this reference). 

 We have
\begin{equation}\label{partialwcero}
(\partial \omega_0)^2=\frac{4m^4\rho^6F}{r^8\Sigma^4}
\end{equation}
where
\begin{equation}
F=4r^2a^4\tilde r^2\sin^2(2\theta)+\left(3\tilde r^4+a^2\tilde r^2+a^2(\tilde
  r^2-a^2)\cos^2\theta\right)^2 
\end{equation}
and $\tilde r=r+m$. Then
\begin{eqnarray}
F&\leq&4r^2a^4\tilde r^2+\left(3\tilde r^4+a^2\tilde r^2+a^2\tilde
  r^2\right)^2\leq29(r+a)^8. 
\end{eqnarray}
We also find, bounding $\Sigma\geq(r+a)^2$ and $\rho\leq r$, that
\begin{equation}
(\partial \omega_0)^2\leq\frac{4a^4\rho^4 29(r+a)^8}{r^6(r+a)^8}=116\frac{m^4}{r^2}.
\end{equation}

Finally, using the explicit expressions for $\Phi_0$ and $\omega_0$ one can
check, after a laborious but straightforward calculation, the bound on
$|\Delta_2q_0|$.

\end{proof}

\section{The implicit function theorem}
\label{sec:impl-funct-theor}
To facilitate the readability of the article and also to fix the notation, we
reproduce in this appendix well known results on differential calculus in
Banach spaces (see, for example \cite{abraham88}, \cite{Choquet77}, and also
the more introductory text books \cite{McOwen96}, \cite{Renardy04}).

Let $X$ and $Z$ be Banach spaces. Let $A:X\to Z$ be a linear bounded operator.
We denote by $\mathcal{L}(X,Z)$ the set of all linear and bounded operators
from $X$ to $Z$. The set $\mathcal{L}(X,Z)$ is itself a Banach space with the
operator norm defined by
\begin{equation}
  \label{eq:46}
\lVert A \rVert_{\mathcal{L}(X,Z)}=\sup_{\lVert x \rVert \neq 0}
\frac{\lVert A(x) \rVert_Z}{\lVert x \rVert_X}. 
\end{equation}
Let $x$ be a point in $X$ and let $G$ be a mapping from a neighborhood of $x$ into
$Z$. Then $G$ is called Fr\'echet differentiable at the point $x$ if there exists
a linear operator $DG(x)\in \mathcal{L}(X,Z)$ such that
\begin{equation}
  \label{eq:45}
\lim_{v\to0}\frac{\lVert G(x+v)-G(x)-DG(x)[v]\rVert}{\lVert x \rVert_X  }=0. 
\end{equation}
The map $G$ is called continuously differentiable (i.e. $C^1$) if the derivative
$DG(x)$ as an element of $\mathcal{L}(X,Z)$  depends continuously on
$x$. Namely,  for every $\delta>0$ there
  exists $\epsilon>0$ such that
\begin{equation}
  \label{eq:47}
\lVert x_1-x_2 \rVert_X <\epsilon \Rightarrow \lVert D G(x_1)- DG(x_2)
\rVert_{\mathcal{L}(X,Z)}   <\delta.
  \end{equation}
  Let $X$, $Y$ and $Z$ be Banach spaces and let $G$ be a map $G:X\times Y \to
  Z$, in a similar way we define the partial derivatives with respect to the
  first argument by $D_1G(x,y)$ and with respect to the second argument by
  $D_2G(x,y)$.

\begin{theorem}[Implicit Function Theorem]
\label{t:ift}
Suppose $U$ is a neighborhood of $0$ in $X$, $V$ is a neighborhood of $0$ in
$Y$, and $G:X\times Y \to Z$ is $C^1$. Suppose $G(0,0)=0$ and $D_2G(0,0) :Y\to
Z$ defines a bounded operator and it is an isomorphism.  Then, there exists a
neighborhood $W$ of the origin in $X$ and a continuously differentiable mapping
$f:W\to Y$ such that $G(x,f(x))=0$. Moreover, for small $x$ and $y$, $f(x)$ is
the only solution $y$ of the equation $G(x,y)=0$.
\end{theorem}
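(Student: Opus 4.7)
The plan is to reduce the implicit equation $G(x,y)=0$ to a parametrized fixed-point problem and apply the Banach contraction principle, then bootstrap to obtain $C^1$ regularity of the resulting function. Let $L := D_2 G(0,0)$. Since by hypothesis $L \in \mathcal{L}(Y,Z)$ is an isomorphism, $L^{-1} \in \mathcal{L}(Z,Y)$ is bounded. Define the auxiliary map
\begin{equation}
T(x,y) := y - L^{-1} G(x,y).
\end{equation}
Clearly $T(x,y)=y$ if and only if $G(x,y)=0$, so solving the implicit equation is equivalent to finding, for each parameter $x$, a fixed point of $T(x,\cdot)$.

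Next I would show that on sufficiently small neighborhoods $T(x,\cdot)$ is a contraction of a closed ball into itself, uniformly in $x$. Since $G$ is $C^1$, so is $T$, and one computes $D_2 T(x,y) = I_Y - L^{-1} D_2 G(x,y)$, which vanishes at $(0,0)$. By continuity of $D_2 G$ in the operator norm, there exist $r,s>0$ such that $\|D_2 T(x,y)\|_{\mathcal{L}(Y,Y)} \leq 1/2$ for $\|x\|_X \leq r$ and $\|y\|_Y \leq s$; the mean value inequality in Banach spaces then gives $\|T(x,y_1)-T(x,y_2)\|_Y \leq (1/2)\|y_1-y_2\|_Y$ on this set. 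Shrinking $r$ further, using $T(0,0)=0$ and continuity of $T$ in the first argument, I arrange $\|T(x,0)\|_Y \leq s/2$, so that $T(x,\cdot)$ maps the closed ball $\overline{B_s(0)} \subset Y$ into itself. The Banach Fixed Point Theorem then produces a unique fixed point $f(x) \in \overline{B_s(0)}$ for each $\|x\|_X \leq r$, which simultaneously yields the existence of $f$ and the final local uniqueness statement. Continuity (in fact Lipschitz continuity) of $f$ follows from the standard parametric estimate
\begin{equation}
\|f(x_1)-f(x_2)\|_Y \leq 2\,\|T(x_1,f(x_2)) - T(x_2,f(x_2))\|_Y,
\end{equation}
combined with the Fréchet differentiability of $G$ in $x$.

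Finally, I would upgrade continuity of $f$ to $C^1$ regularity. The candidate derivative is
\begin{equation}
A(x) := -[D_2 G(x, f(x))]^{-1}\, D_1 G(x, f(x)),
\end{equation}
which is well-defined after possibly shrinking the neighborhood $W$: by the Neumann series argument, the isomorphisms form an open subset of $\mathcal{L}(Y,Z)$ on which inversion is continuous, so $D_2 G(x,f(x))$ stays invertible near $x=0$ by continuity of $D_2 G$ and of $f$. To identify $Df(x)=A(x)$, expand the identity $G(x+h,f(x+h)) - G(x,f(x)) = 0$ using Fréchet differentiability of $G$ and the already-established Lipschitz estimate on $f$, then apply $[D_2 G(x,f(x))]^{-1}$ and read off the $o(\|h\|_X)$ remainder. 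Continuity of $A$ as a function of $x$ then follows from continuity of $D_1G$, $D_2G$, $f$, and the inversion map on invertible operators. The main obstacle is this last $C^1$ step: one must simultaneously maintain invertibility of $D_2 G(x,f(x))$ throughout $W$, use the Lipschitz behavior of $f$ to control the error terms in the Fréchet expansion of $G$, and invoke continuity of operator inversion — a routine but technically delicate bootstrap that relies essentially on the openness of the isomorphism group in $\mathcal{L}(Y,Z)$.
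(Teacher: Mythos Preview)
Your proof sketch is the standard and correct contraction-mapping argument for the Implicit Function Theorem in Banach spaces. However, the paper does not actually prove this statement: Theorem~\ref{t:ift} appears in an appendix that merely ``reproduce[s] \ldots\ well known results on differential calculus in Banach spaces'' to fix notation, citing standard references (Abraham--Marsden--Ratiu, Choquet-Bruhat et al., McOwen, Renardy--Rogers) in lieu of a proof. So there is nothing to compare against --- the authors treat the theorem as a black box and invoke it in the proof of Theorem~\ref{teo}. Your argument is precisely the one found in those references and would be an appropriate supplement if a self-contained proof were desired.
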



\end{document}